\def\be{\begin{equation}}
\def\ee{\end{equation}}
\def\bea{\begin{eqnarray}}
\def\eea{\end{eqnarray}}
\def\f{\frac}
\def\n{\nonumber}
\def\l{\label}
\def\p{\phi}
\def\o{\over}
\def\R{\rho}
\def\pa{\partial}
\def\om{\omega}
\def\na{\nabla}
\def\P{\Phi}
\newtheorem{theorem}{Theorem}
\begin{document}

\title{Tightening the tripartite quantum memory assisted entropic uncertainty relation}

\author{H. Dolatkhah}
\affiliation{Department of Physics, University of Kurdistan, P.O.Box 66177-15175, Sanandaj, Iran}
\author{S. Haseli}
\email{soroush.haseli@uut.ac.ir}
\affiliation{Faculty of Physics, Urmia University of Technology, Urmia, Iran}
\author{S. Salimi}
\author{A. S. Khorashad}
\affiliation{
Department of Physics, University of Kurdistan, P.O.Box 66177-15175, Sanandaj, Iran.\\}
\date{\today}

\def\be{\begin{equation}}
  \def\ee{\end{equation}}
\def\bea{\begin{eqnarray}}
\def\eea{\end{eqnarray}}
\def\f{\frac}
\def\n{\nonumber}
\def\l{\label}
\def\p{\phi}
\def\o{\over}
\def\R{\rho}
\def\pa{\partial}
\def\om{\omega}
\def\na{\nabla}
\def\P{\Phi}

\begin{abstract}
The uncertainty principle determines the distinction  between the classical and quantum worlds. This principle states that it is not possible to measure two incompatible observables with the desired accuracy simultaneously. In quantum information theory, Shannon entropy has been used as an appropriate measure to express the uncertainty relation. According to the applications of entropic uncertainty relation, studying and trying to improve the bound of this relation is of great importance. Uncertainty bound can be altered by considering an extra quantum system as the quantum memory $B$ which is correlated with the measured quantum system $A$. One can extend the bipartite quantum memory assisted entropic uncertainty relation to tripartite quantum memory assisted entropic uncertainty relation in which the memory is split into two parts. In this work, we obtain a lower bound for the tripartite quantum memory assisted entropic uncertainty relation. Our lower bound has two additional terms compared to the lower bound in [Phys. Rev. Lett. 103, 020402 (2009)] which depending on the conditional von Neumann entropy, the Holevo quantity and mutual information. It is shown that the bound obtained in this work is more tighter than other bounds. In addition, using our lower bound, a lower bound for the quantum secret key rate has been obtained. The lower bound is also used to obtain the states for which the strong subadditivity inequality and Koashi-Winter inequality is satisfied with equality.
\end{abstract}
\maketitle

\section{Introduction}	
Uncertainty principle is undoubtedly one of the fundamental concepts in quantum theory. This principle defines the distinction between the classical  and the quantum world \cite{Heisenberg}. This principle sets a bound on our ability to predict the measurement outcomes of two incompatible observables which simultaneously are measured on a quantum system. Actually, for two arbitrary observables, $X$ and $Z$, Robertson showed that \cite{Robertson,Schorodinger}
\begin{equation}\label{Eq1}
\Delta X \Delta Z\geq\frac{1}{2}|\langle [X, Z]\rangle|,
\end{equation}
where $\Delta P=\sqrt{\langle P^{2}\rangle-\langle P \rangle^{2}} $  with $P \in \{X, Z\}$ shows the standard deviation, $\langle P \rangle$ represents  the expectation value of operator $\kappa$ , and $[X, Z] =XZ-ZX$. This form of uncertainty relation is still one of the most well-known uncertainty relations.\\
In quantum information theory, the uncertainty principle can be formulated in terms of the Shannon entropy. The most famous form  of entropic uncertainty relation (EUR) was introduced by Deutsch  \cite{Deutsch} and then improved by Massen and Uffink  \cite{Uffink}.  They have shown that for two incompatible observables $X$ and $Z$, the following EUR holds
\begin{equation}\label{Maassen and Uffink}
H(X)+H(Z)\geqslant \log_2 \frac{1}{c}\equiv q_{MU},
\end{equation}
where $H(P) = -\sum_{k} p_k \log_2 p_k$ is the Shannon entropy of the measured observable $P \in \lbrace X, Z \rbrace$, $p_k$ is the probability of the outcome $k$, the quantity $c$ is defined as $c = \max_{\lbrace \mathbb{X},\mathbb{Z}\rbrace } \vert\langle x_{i} \vert z_{j}\rangle \vert ^{2}$, where  $\mathbb{X}=\lbrace \vert x_{i}\rangle \rbrace$ and $\mathbb{Z}=\lbrace \vert z_{j}\rangle \rbrace$ are eigenstates of observables $X$ and $Z$, respectively and $q_{MU}$ is called incompatibility measure. \\
The EUR has a wide range of different applications in the field of quantum information, including quantum key distribution \cite{Koashi,Berta}, quantum cryptography \cite{Dupuis,Koenig}, quantum randomness \cite{Vallone,Cao}, entanglement witness \cite{Berta2,Huang,Bagchi}, EPR steering \cite{Walborn,Schneeloch}, and quantum metrology \cite{Giovannetti}.\\
So far, many efforts have been made to expand and modify this relation \cite{Berta,Coles1,Bialynicki,Pati,Ballester,Vi,Wu,Wehner,Rudnicki,Rudnicki1,Pramanik,Maccone,Pramanik1, Zozor,Coles,Adabi,Adabi1,Dolatkhah,Haseli2,Yunlong,Liu,Kamil,Zhang,R}.  Berta \emph{et al.} studied bipartite quantum memory assisted entropic uncertainty relation (QMA-EUR) \cite{Berta} which can be explained by means of an interesting game between two players, Alice and Bob. At the beginning of the game, Alice and Bob share a quantum state $\rho_{AB}$. In the next step, Alice carries out  a measurement on her quantum system $A$ by choosing one of the observables $X$ and $Z$, then she  announces her choice of the measurement to Bob which keeps  the quantum memory $B$. Bob's task is to predict the outcome of Alice's measurement. It is shown that the bipartite QMA-EUR can be written as \cite{Berta}  
\begin{equation}\label{Berta}
S(X \vert B)+S(Z \vert B) \geqslant q_{MU} +S(A \vert B),
\end{equation}
where $S(P \vert B) = S(\rho_{PB})-S(\rho_{B})$ $(P \in \lbrace X, Z \rbrace)$ are the conditional von Neumann entropies of the post measurement states after measuring $X$ or $Z$ on the part $A$, 
\begin{equation}
\rho_{XB}= \sum_{i}(\vert x_{i}\rangle\langle x_{i}\vert_{A}\otimes \mathbf{I}_B ) \rho_{AB}(\vert x_{i}\rangle\langle x_{i}\vert_{A}\otimes \mathbf{I}_{B} ),\nonumber
\end{equation}
 
\begin{equation}
\rho_{ZB}= \sum_{j}(\vert z_{j}\rangle\langle z_{j}\vert_{A}\otimes \mathbf{I}_{B} ) \rho_{AB}(\vert z_{j}\rangle\langle z_{j}\vert_{A}\otimes \mathbf{I}_{B} ),\nonumber
\end{equation}
 and $S(A|B) = S(\rho_{AB})-S(\rho_{B})$ is  the conditional von Neumann entropy. Note that when the conditional entropy $S(A \vert B)$ is negative which means that $A$ and $B$  are entangled, Bob can predict Alice's measurement outcomes with better accuracy. Moreover, when the measured particle $A$ and the memory particle $B$ are maximally entangled, Bob can perfectly predict Alice's measurement outcomes. Also, in the absence of a quantum memory, Eq.\;(\ref{Berta}) reduces to
\begin{equation}\label{Berta2}
H(X)+H(Z)\geqslant q_{MU}+S(A),
\end{equation}
which is tighter than the Maassen and Uffink EUR due to $S(A)\geqslant 0$.

Much efforts has been made to improve the lower bound of the bipartite QMA-EUR \cite{Pati,Adabi,Coles}. Pati \emph{et al.} improved the Berta's bound by adding a term to the lower bound in Eq.\;(\ref{Berta}). The term depends on the classical correlation and quantum discord \cite{Pati}.

Adabi \emph{et al.} provided a lower bound for the uncertainties $S(X \vert B)$ and
$S(Z \vert B)$ by considering an additional term on the right-hand side of Eq.\;(\ref{Berta}) \cite{Adabi},
\begin{equation}\label{new1}
S(X \vert B)+S(Z \vert B)\geqslant q_{MU} + S(A|B)+\max\{0 , \delta\},
\end{equation}
where  $$\delta=I(A:B)-[I(X:B)+I(Z:B)],$$
in which $$I(A:B)=S(\rho_{A})+S(\rho_{B})-S(\rho_{AB})$$  is mutual information
and 
$$I(P:B)= S(\rho_{B})- \sum_{i}p_{i}S(\rho_{B|i})$$
is the Holevo quantity. It is equal to the upper bound of the accessible information to Bob about Alice's measurement outcomes. Note that when Alice measures the observable $P$ on the part A, the $i$-th outcome with probability $p_{i}= Tr_{AB}(\Pi^{A}_{i}\rho_{AB}\Pi^{A}_{i})$ is obtained and the part $B$ is left in the corresponding state $\rho_{B|i}= \frac{Tr_{A}(\Pi^{A}_{i}\rho_{AB}\Pi^{A}_{i})}{p_{i}}$. Adabi \emph{et al.} showed that this lower bound is tighter than both the Berta and Pati lower bounds.\\
It is possible to extend the bipartite QMA-EUR to tripartite one. In tripartite scenario, two additional particle $B$ and $C$ are considered as the quantum memories. In fact, parts $A$, $B$, and $C$ are available to Alice, Bob, and Charlie, respectively. In this case, Alice, Bob, and Charlie share a quantum state $\rho_{ABC}$ and Alice carries out one of two measurements, $X$ and $Z$, on her quantum system. If she measures $X$, then Bob's task is to minimize his uncertainty about $X$.  If she measures $Z$, then Charlie’s task is to minimize his uncertainty about $Z$. It is shown that the tripartite QMA-EUR can be expressed as
\begin{equation}\label{tpu}
S(X \vert B)+S(Z \vert C)\geqslant q_{MU},
\end{equation}
where $q_{MU}$ is the same as that in Eq.\;(\ref{Berta}). This equation was conjectured by Renes and Boileau \cite{Renes} and then proved by Berta \emph{et al.} \cite{Berta}. The proof was simplified by Tomamichel and Renner \cite{Tomamichel}  and Coles \emph{et al.} \cite{Coles44}.
 
As can be seen the lower bound of Eq.\;(\ref{tpu}) depends only on the complementarity of the observables. This means that with respect to given observables $X$ and $Z$, this  lower bound is a constatnt value for any state $\rho_{ABC}$ that is shared between Alice, Bob, and Charlie. Although the bound should be depends on the state of system. According to our knowledge so far, there have been few improvement of the tripartite QMA-EURs. However, recently Ming  \emph{et al.} \cite{Ming} improved the lower bound of the tripartite QMA-EUR by adding a term to the lower bound in Eq.\;(\ref{tpu}),
 \begin{equation}\label{tpu2}
S(X \vert B)+S(Z \vert C)\geqslant q_{MU}+\max\{0 , \Delta\},
\end{equation}
where 
\begin{eqnarray}
\Delta &=&q_{MU}+2S(A)-\left[I(A:B)+I(A:C)\right]\nonumber \\
&+&\left[I(Z:B)+I(X:C)\right]-H(X)-H(Z).
\end{eqnarray}
They showed that this lower bound is tighter than that of Eq.\;(\ref{tpu}).

In this work,  we introduce a lower bound for the tripartite QMA-EUR by adding two additional terms  to the lower bound in Eq. (\ref{tpu}) which one depending on the conditional von Neumann entropies and the other one depending on the mutual information and the Holevo quantity. We show that there exist states $\rho_{ABC}$ for which our lower bound is perfectly tight.
We examine our lower bound for four examples and compare our lower bound with the other lower bounds. We show that our lower bound for generalized Greenberger-Horne-Zeilinger (GHZ) states and Werner-type states coincides with Ming \emph{et al.} lower bound \cite{Ming} and for generalized W states and symmetric family of mixed three qubit states our lower bound is tighter than that of Ming \emph{et al.}. As applications, here we obtain a lower bound for the quantum secret key rate based on our results. Also, we explain how our lower bound can be applied to find states that saturating the strong subadditivity (SSA) inequality and Koashi-Winter inequality. 

The paper is organized as follows: In Sec. \ref{Sec2} we introduce the lower bound for the tripartite QMA-EUR. In Sec. \ref{Sec3} we examine our lower bound for four examples and compare our lower bound with the other lower bounds. In Sec. \ref{Sec4}, we discuss two of the applications of of our lower bound. Finally, the results are summarized in Sec. \ref{conclusion}.

\section{Improved tripartite QMA-EUR}\label{Sec2}
 In this section, a lower bound for  the tripartite QMA-EUR is obtained.\\ 
\begin{theorem}
Let $X$ and $Z$ be two incompatible observables with bases $\mathbb{X}$ and $\mathbb{Z}$, respectively. The following tripartite uncertainty relation holds for any state $\rho_{ABC}$,
\begin{equation}\label{tpu3}
S(X \vert B)+S(Z \vert C)\geqslant q_{MU}+\frac{S(A|B)+S(A|C)}{2}+\max\{0 , \delta\},
\end{equation}
where  $$\delta=\frac{I(A:B)+I(A:C)}{2} -[I(X:B)+I(Z:C)],$$
\end{theorem}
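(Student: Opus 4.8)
The plan is to reduce the tripartite statement to two applications of the Adabi \emph{et al.} bound~\eqref{new1} — one in which Bob's memory $B$ plays the role of the quantum memory, and one in which Charlie's memory $C$ does — and then to average the two inequalities. The key observation is that \eqref{new1} is a statement about a single measurement pair $\{X,Z\}$ on system $A$ relative to \emph{one} memory; applying it once with memory $B$ gives $S(X|B)+S(Z|B)\geq q_{MU}+S(A|B)+\max\{0,\delta_B\}$ with $\delta_B=I(A:B)-[I(X:B)+I(Z:B)]$, and applying it once with memory $C$ gives $S(X|C)+S(Z|C)\geq q_{MU}+S(A|C)+\max\{0,\delta_C\}$ with $\delta_C=I(A:C)-[I(X:C)+I(Z:C)]$.

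First I would add these two inequalities and divide by $2$, obtaining
\begin{equation}
\frac{S(X|B)+S(X|C)}{2}+\frac{S(Z|B)+S(Z|C)}{2}\geq q_{MU}+\frac{S(A|B)+S(A|C)}{2}+\frac{\max\{0,\delta_B\}+\max\{0,\delta_C\}}{2}.\nonumber
\end{equation}
The left-hand side is not yet the quantity we want: we need $S(X|B)+S(Z|C)$, not the symmetrized combination. The bridge is the pair of inequalities $S(X|C)\geq S(X|B)$ is \emph{false} in general, so instead I would argue differently — the correct route is to note that for the post-measurement state $\rho_{XBC}$ the conditional entropy obeys $S(X|BC)\leq S(X|B)$ and $S(X|BC)\leq S(X|C)$ (monotonicity of conditional entropy under discarding a subsystem), and similarly for $Z$; combined with the tripartite relation $S(X|BC)+S(Z|B)$-type manipulations this still does not immediately close. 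The cleaner path, which I would actually pursue, is to keep the symmetrized left-hand side and then invoke the chain of inequalities $S(X|B)+S(Z|C)\geq \tfrac{1}{2}[S(X|B)+S(Z|B)]+\tfrac{1}{2}[S(X|C)+S(Z|C)]$, which holds precisely because $S(Z|C)\geq S(Z|B)$ would be needed — so the genuinely safe argument is to apply \eqref{new1} in the two \emph{cross} forms: once deriving a bound for $S(X|B)+S(Z|C)$ directly by treating $BC$ jointly and using strong subadditivity to split off the pieces. Concretely, $S(X|B)+S(Z|C)\geq S(X|BC)+S(Z|BC)\geq q_{MU}+S(A|BC)+\max\{0,\delta_{BC}\}$, and then $S(A|BC)\geq \tfrac12[S(A|B)+S(A|C)]$ is exactly the statement that $I(A:B)+I(A:C)\leq I(A:BC)$ fails in the wrong direction — so one uses instead the weak-monotonicity/SSA inequality $S(A|BC)\le S(A|B)$ together with the symmetry, landing on the averaged form after symmetrizing over the $B\leftrightarrow C$ relabelling of the roles of the two measurements.

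The main obstacle, and the step I expect to require the most care, is exactly this combinatorial bookkeeping: showing that the two honest applications of \eqref{new1} can be averaged so that (i) the left side collapses to $S(X|B)+S(Z|C)$ rather than its symmetrization, and (ii) the $\max\{0,\delta_B\}$ and $\max\{0,\delta_C\}$ terms combine into a single $\max\{0,\delta\}$ with $\delta=\tfrac12[\delta_B+\delta_C]$ — this uses convexity of $t\mapsto\max\{0,t\}$, giving $\tfrac12\max\{0,\delta_B\}+\tfrac12\max\{0,\delta_C\}\geq\max\{0,\tfrac12(\delta_B+\delta_C)\}$, and one checks that $\tfrac12(\delta_B+\delta_C)$ with the $I(X:C)$ and $I(Z:B)$ cross-terms dropped (they are nonnegative) is at least the stated $\delta=\tfrac12[I(A:B)+I(A:C)]-[I(X:B)+I(Z:C)]$. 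So the real content is: symmetrize correctly, use concavity of $\max\{0,\cdot\}$ in the right direction, and discard the nonnegative Holevo cross-terms $I(Z:B)$, $I(X:C)$ to pass from $\tfrac12(\delta_B+\delta_C)$ to $\delta$. Once the bookkeeping is arranged so that the symmetrization of the left-hand side is legitimate — which I would justify by applying \eqref{new1} once with the labels $(X,Z,B)$ and once with the labels $(Z,X,C)$, i.e. swapping which observable is "first" — the rest is routine and the inequality~\eqref{tpu3} follows.
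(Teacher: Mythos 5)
Your strategy---averaging two applications of the bipartite Adabi \emph{et al.} bound, Eq.~(\ref{new1}), once with memory $B$ and once with memory $C$---does not close, and you essentially concede this yourself. The obstruction is real: the average of the two bipartite bounds controls the symmetrized quantity $\tfrac12[S(X|B)+S(Z|B)+S(X|C)+S(Z|C)]$, and every bridge you propose for converting this into $S(X|B)+S(Z|C)$ is one you then correctly identify as false or as pointing the wrong way ($S(Z|C)\geqslant S(Z|B)$ fails in general; $S(A|BC)\geqslant\tfrac12[S(A|B)+S(A|C)]$ is the reverse of what SSA gives; relabelling $(X,Z,B)\to(Z,X,C)$ changes nothing because the left side of Eq.~(\ref{new1}) is symmetric in the two observables). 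Moreover, even if the left-hand side could be repaired, the bookkeeping on the correction term fails: convexity does give $\tfrac12\max\{0,\delta_B\}+\tfrac12\max\{0,\delta_C\}\geqslant\max\{0,\tfrac12(\delta_B+\delta_C)\}$, but $\tfrac12(\delta_B+\delta_C)-\delta=\tfrac12[I(X:B)+I(Z:C)]-\tfrac12[I(Z:B)+I(X:C)]$ has no definite sign, so $\max\{0,\tfrac12(\delta_B+\delta_C)\}\geqslant\max\{0,\delta\}$ is not guaranteed; ``dropping the nonnegative cross-terms'' only \emph{increases} $\tfrac12(\delta_B+\delta_C)$, which is useless for lower-bounding the original expression by $\max\{0,\delta\}$.

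The missing idea is to avoid symmetrizing the left-hand side altogether. Writing $S(X|B)=H(X)-I(X:B)$ and $S(Z|C)=H(Z)-I(Z:C)$ makes $S(X|B)+S(Z|C)=H(X)+H(Z)-I(X:B)-I(Z:C)$ the target quantity from the outset; one then applies the single-system bound $H(X)+H(Z)\geqslant q_{MU}+S(A)$ of Eq.~(\ref{Berta2}) and decomposes the \emph{same} term $S(A)$ in two ways, $S(A)=S(A|B)+I(A:B)=S(A|C)+I(A:C)$. Averaging these two decompositions takes place entirely on the right-hand side, where it is harmless, and produces exactly $\tfrac12[S(A|B)+S(A|C)]+\delta$ with the stated $\delta$. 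Your route never recovers the asymmetric combination $S(X|B)+S(Z|C)$, so as written the proposal does not prove the theorem.
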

\begin{proof}
Regarding $S(X\vert B)=H(X)-I(X:B)$ and $S(Z\vert C)=H(Z)-I(Z:C)$, the left-hand side of Eq.\;(\ref{tpu3}) can be rewritten as\

\begin{eqnarray}\label{p1}
S(X \vert B)+S(Z \vert C)&= &  H(X)+H(Z)-I(X:B)-I(Z:C) \nonumber \\ 
&\geqslant & q_{MU}+S(A)-I(X:B)-I(Z:C) \nonumber \\
&=& q_{MU}+S(A|B) \nonumber \\
&+&I(A:B)-[I(X:B)+I(Z:C)],
 \end{eqnarray}
where the inequality  follows from the Eq.\;(\ref{Berta2}) and last equality comes from the identity $S(A)=S(A|B)+I(A:B)$.
Using $S(A)=S(A|C)+I(A:C)$ in the last line of the above equation, one can obtain
\begin{eqnarray}\label{p2}
S(X \vert B)+S(Z \vert C)& \geqslant & q_{MU}+S(A|C)  \\
&+&I(A:C)-[I(X:B)+I(Z:C)],\nonumber
\end{eqnarray}
from Eqs.\;(\ref{p1}) and (\ref{p2}), one arrives at 
\begin{eqnarray}
S(X \vert B)+S(Z \vert C)&\geqslant & q_{MU}+\frac{S(A|B)+S(A|C)}{2}\nonumber \\
&+&\frac{I(A:B)+I(A:C)}{2} \nonumber \\
&-&[I(X:B)+I(Z:C)],
\end{eqnarray}
which can be rewritten as:
\begin{equation}
S(X \vert B)+S(Z \vert C)\geqslant q_{MU}+\frac{S(A|B)+S(A|C)}{2}+\max\{0 , \delta\}.
\end{equation}
\end{proof}
 
As can be seen the lower bound of Eq.\;(\ref{tpu3}) includes three terms. The first, $q_{MU}$, 
 depends on the complementarity of the observables. The second, $\frac{S(A|B)+S(A|C)}{2}$, is closely related to the SSA inequality. The third, $\max\{0 , \delta\}$, depends on mutual information and Holevo quantity.\\   
It should be mentioned that the inequality $\frac{S(A|B)+S(A|C)}{2} \geqslant 0$ is always true due to the SSA inequality \cite{Nielsen}. Thus, it is clear that our lower bound is stronger than Eq.\;(\ref{tpu})  since the additional terms, $\frac{S(A|B)+S(A|C)}{2} \geqslant 0$ and $\max\{0 , \delta\}$ are non-negative.\\

As mentioned in the introduction, in the bipartite QMA-EUR, entanglement between the part $A$ and the quantum memory $B$, $S(A|B)<0$, leads to Bob can predict Alice’s measurement outcomes of both observables $X$ and $Z$ with better accuracy. While in the tripartite QMA-EUR, if the conditional entropy
 $S(A|B)$ is negative, then the conditional entropy $S(A|C)$ must be positive to satisfy the SSA inequality, making it hard for Charlie to guess  Alice’s measurement outcomes of observable $Z$. In other words, Eq.\;(\ref{tpu3}) states that the more Bob knows about $X$, the less Charlie knows about $Z$, and vice versa. \\
It is worth noting that there are some special cases for which our lower bound is perfectly tight. One is that if $X$ and $Z$ are complementary and the subsystem $A$ is maximally mixed, such as the GHZ state. In another case, when $X$ and $Z$ are complementary and $X$ $(Z)$ minimally disturbs subsystem $A$, $H(X)$ $[H(Z)]$ is equal to $S(A)$, and $H(Z)$  $[H(X)]$ is equal to $log_{2}d$, which leads to
\begin{equation}
H(X)+H(Z)=log_{2}d+S(A),
\end{equation} 
where $d$ is the dimension of subsystem $A$. In this case again, our lower bound in Eq.\;(\ref{tpu3}) is extremely tight, such as the generalized GHZ states, the generalized W states, the  Werner-type states and the three-qubit $X$-structure states.

\section{Examples}\label{Sec3}

\subsection{Generalized GHZ state}\label{GHZ}
First, let us consider the  generalized GHZ states defined as
\begin{equation}\label{GHZ}
\vert GGHZ\rangle = cos\beta \vert 000\rangle +sin\beta \vert 111\rangle,  
\end{equation}
where $\beta\in\left[  0,2\pi\right)$. Two complementary observables measured on the part $A$ of this state are assumed to be the Pauli matrices, $X = \sigma_{1}$ and $Z = \sigma_{3}$. In Fig. \ref{fig1}, different lower bounds of the tripartite QMA-EUR for these states are plotted versus the parameter $\beta$. As can be seen, our lower bound coincides with those in Eqs. (\ref{tpu}) and (\ref{tpu2}).
\begin{figure}[ht] 
\centering
\includegraphics[width=8cm]{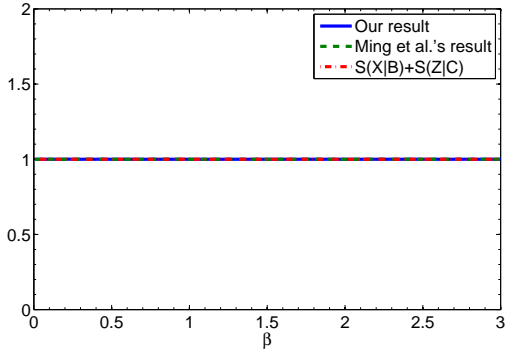}
\caption{(Color online) Different lower bounds of the tripartite QMA-EUR for two complementary observables $X=\sigma_{1}$ and $Z=\sigma_{3}$ measured on the part $A$ of the state in Eq.\;(\ref{GHZ}), versus the parameter $\beta$.}\label{fig1}
\end{figure}
\begin{figure}[ht] 
\centering
\includegraphics[width=8cm]{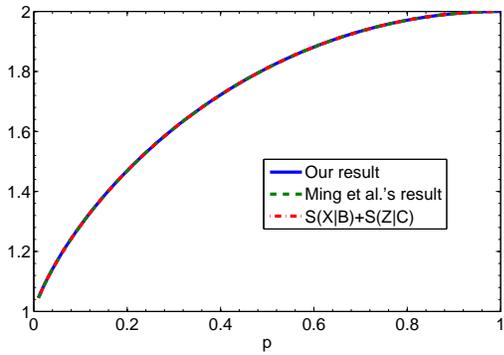}
\caption{(Color online) Different lower bounds of the tripartite QMA-EUR for two complementary observables $X=\sigma_{1}$ and $Z=\sigma_{3}$ measured on the part $A$ of the state in Eq.\;(\ref{werner}), versus the parameter $p$.}\label{fig2}
\end{figure}

\subsection{Werner-type state}

As a second example, let us consider the  Werner-type states defined as
 \begin{equation}\label{werner}
 \rho_{w}=(1-p) \vert GHZ \rangle \langle GHZ \vert + \frac{p}{8}\mathbf{I}_{ABC},
 \end{equation}
 where $\vert GHZ \rangle = 1/\sqrt{2}(\vert 000 \rangle + \vert 111 \rangle)$ is the GHZ state  and $0 \leq p \leq 1$. For these states we find that our lower bound and Ming \emph{et al.}’s lower bound completely coincide with each other, as shown in Fig. \ref{fig2}.
\begin{figure}[ht] 
\centering
\includegraphics[width=8cm]{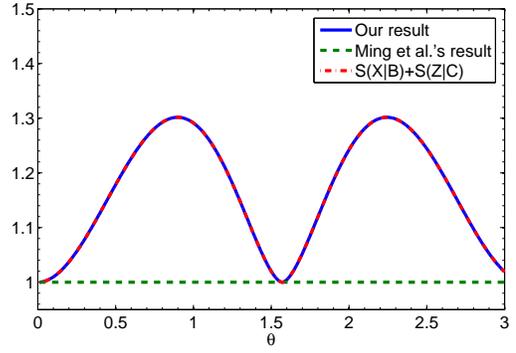}
\caption{(Color online) Different lower bounds of the tripartite QMA-EUR for two complementary observables $X=\sigma_{1}$ and $Z=\sigma_{3}$ measured on the part $A$ of the state in Eq.\;(\ref{W}), versus the parameter $\theta$, where $\phi=\pi/4$}\label{fig3}
\end{figure}
\begin{figure}[ht] 
\centering
\includegraphics[width=8cm]{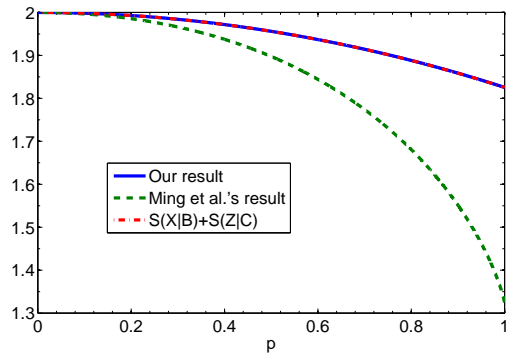}
\caption{(Color online) Different lower bounds of the tripartite QMA-EUR for two complementary observables $X=\sigma_{1}$ and $Z=\sigma_{3}$ measured on the part $A$ of the state in Eq.\;(\ref{Mix}), versus the parameter $p$, where $ 0 \leq p \leq 1$}\label{fig4}
\end{figure}
\subsection{Generalized W state}
As an another example, we consider the generalized W states defined as
\begin{equation}\label{W}
\vert GW \rangle = sin\theta cos\phi \vert 100\rangle +sin\theta sin\phi \vert 010\rangle +cos\theta \vert 001\rangle ,
\end{equation}
where $\theta\in\left[  0,\pi\right]$ and $\phi\in\left[  0,2\pi\right)$. In Fig. \ref{fig3}, the lower bounds of the tripartite QMA-EUR  for these states are plotted versus the parameter $\theta$. As can be seen,  Ming \emph{et al.}’s lower bound remain a constant
in all intervals related to parameter $\theta$ whereas our lower bound is extremely tight.

\subsection{A symmetric family of mixed three qubit states}
As the last example, let us consider the mixture of the GHZ
state, the W state, and the maximally mixed three-qubit state
\begin{equation}\label{Mix}
\rho=\frac{1-p}{8}\mathbf{I}_{ABC}+\frac{p}{2}\vert GHZ \rangle \langle GHZ \vert + \frac{p}{2}\vert W \rangle \langle W \vert,
\end{equation}
where $ 0 \leq p \leq 1$ is real number and the usual $\vert GHZ \rangle$ and $\vert W \rangle$ states are defined as 
\begin{eqnarray}
\vert GHZ \rangle &=& \frac{1}{\sqrt{2}}(\vert 000 \rangle + \vert 111 \rangle),  \\
\vert W \rangle &=& \frac{1}{3}(\vert 001 \rangle + \vert 010 \rangle + \vert 100 \rangle). \nonumber 
\end{eqnarray} 

In Fig. \ref{fig4} the lower bounds of the tripartite QMA-EUR  for symmetric family of mixed three qubit states are plotted versus the parameter $p$. As can be seen, our lower bound is tighter than that of Ming \emph{et al.} and is identical with the sum of Bob's and Charlie's uncertainties $S(X \vert B)+S(Z \vert C)$.

\section{ Applicatin}\label{Sec4}
\subsection{Quantum key distribution}

The main purpose of the key distribution protocol is the agreement on a shared key between two honest parts (Alice and Bob) by communicating over a public channel in a way that the key is secret from any eavesdropping by the third part (Eve). It has been shown that the amount of the key $K$ that can be extracted by Alice and Bob is lower bounded by \cite{Devetak} 
\begin{equation}\label{app1}
K \geqslant S(Z \vert E) - S(Z \vert B),
\end{equation}
where the eavesdropper (Eve) prepares a quantum state $\rho_{ABE}$ and sends the parts $A$ and $B$ to Alice and Bob, respectively, and keeps $E$.
 
Note that the lower bound of the tripartite QMA-EUR is closely connected with the quantum secret key (QSK) rate. Eq. (\ref{tpu}), leads us to
\begin{equation}\label{app2}
S(Z \vert E) \geqslant q_{MU} - S(X \vert B).
\end{equation}

Regarding  Eqs.\;(\ref{app1}) and (\ref{app2}), Berta \emph{et al.} obtained the following relation for the bound of the QSK rate  \cite{Berta}
\begin{equation}\label{keyrateb}
K \geqslant q_{MU} - S(X \vert B)-S(Z \vert B).
\end{equation} 
Using Eq.\;(\ref{tpu3}), one can obtain a new lower bound on the QSK rate which is 
\begin{eqnarray}\label{keyrateH}
K^{\prime} & \geqslant & q_{MU}+\frac{S(A|B)+S(A|C)}{2}  \nonumber \\
&+&\max\{0 , \delta\}- S(X \vert B)-S(Z \vert B).
\end{eqnarray} 
Compared with Eq.\;(\ref{keyrateb}), the QSK rate has lower bounded by two additional terms in Eq.\;(\ref{keyrateH}). Since these terms are greater than or equal
to zero, one comes to the result  that $K^{\prime}$ is tighter than $K$. 
\subsection{Strong subadditivity}
The SSA inequality states that
\begin{equation}\label{SSA}
S(\rho_{ABC})+S(\rho_{C})\leqslant S(\rho_{AC})+S(\rho_{BC}).
\end{equation}
Eq.\;(\ref{SSA}) is equivalent to
\begin{equation}
\frac{S(A|B)+S(A|C)}{2}\geqslant 0.
\end{equation}
This  form of SSA states that although the quantum conditional entropies can be negative, both of them cannot be negative simultaneously. The structure of states for which the SSA inequality is saturated is not trivial \cite{Petz,Ruskai,Hayden}. According to our relation if $S(X \vert B)+S(Z \vert C)- q_{MU}-\max\{0 , \delta\}=0,$ then $\frac{S(A|B)+S(A|C)}{2}=0$, which means that $\rho_{ABC}$ satisfies the SSA inequality with equality. Moreover, due to the fact that measurement do not decrease entropy and $\max\{0 , \delta\}\geqslant 0$, one arrives at if $S(X \vert X^{\prime})+S(Z \vert Z^{\prime})- q_{MU}=0,$ then $\rho_{ABC}$ satisfies the SSA inequality with equality.
 It is also shown that if $\rho_{ABC}$ satisfies the SSA inequality with equality, then \cite{Akhtarshenas} 
\begin{enumerate} 
\item  it satisfies the Koashi-Winter relation \cite{Koashi1} with equality, 
 \begin{equation}\label{Koashi-Winter}
E(\rho_{AB})=D^{C}(\rho_{AC})+S(A|C),
 \end{equation}
 and
 \begin{equation}\label{Koashi-Winter1}
  E(\rho_{AC})=D^{B}(\rho_{AB})+S(A|B),
 \end{equation}
\item it satisfies the quantum conservation law \cite{Fanchini1},
\begin{equation}\label{conservation law}
E(\rho_{AB})+E(\rho_{AC})=D^{B}(\rho_{AB})+D^{C}(\rho_{AC}),
\end{equation}
where $E(\rho_{AY})$, $Y=B, C$, is entanglement of
formation which defined as
\begin{equation}
E(\rho_{AY})=\min_{\lbrace p_{i},\vert \psi_{i} \rangle\rbrace } \sum_{i}p_{i}S(Tr_{Y}(\vert \psi_{i} \rangle \langle \psi_{i} \vert)), 
\end{equation}  
in which minimum is taken over all ensembles ${\lbrace p_{i},\vert \psi_{i} \rangle\rbrace }$ satisfying $\rho_{AY}= \sum_{i}p_{i}\vert \psi_{i} \rangle$, and $D^{Y}(\rho_{AY}):=I(A:B)-J_{Y}(\rho_{AY})$ is quantum discord, where 
\begin{equation}\label{ss}
J_{Y}(\rho_{AY})=\max_{\lbrace \Pi_{i}^{Y} \rbrace}I(A:P)
\end{equation} 
is the classical correlation of the state $\rho_{AY}$. The maximization is over all set of  projection operators ${\lbrace \Pi_{i}^{Y} \rbrace}$ acting on the subsystem $Y$. 
\end{enumerate}

 Therefore, based on the abov-mentioned, one can conclude that $S(X \vert B)+S(Z \vert C)- q_{MU}-\max\{0 , \delta\}=0,$ implies Eqs.\;(\ref{Koashi-Winter}), (\ref{Koashi-Winter1}) and (\ref{conservation law}). Also, $S(X \vert X^{\prime})+S(Z \vert Z^{\prime})- q_{MU}=0,$ implies Eqs.\;(\ref{Koashi-Winter}), (\ref{Koashi-Winter1}) and (\ref{conservation law}). 

\section{Conclusion}\label{conclusion}
 In this work, we have obtained a lower bound for the tripartite QMA-EUR by adding two additional terms depending on the conditional von Neumann entropy, the Holevo quantity and mutual information. We have showed that there are some special cases for which our lower bound is perfectly tight. We have compared our lower bound with the other lower bounds for some examples: especially, for the generalized W states and symmetric family of mixed three qubit states, the comparison of the lower bounds is depicted in Figs. \ref{fig3} and \ref{fig4}, where it is clear that our lower bound is tighter than that of Ming \emph{et al.}. Regarding the tripartite QMA-EUR, we could derive a lower bound for the quantum secret key rate. Also, we have explained how our lower bound can be applied to obtain the states for which the strong subadditivity inequality and Koashi-Winter inequality is satisfied with equality.










\end{document}